\def\tr{\mathrm{tr}}
\def\f{\mathbf f}
\def\u{\mathbf u}
\def\X{\mathcal X}
\def\I{\mathbf I}
\def\M{\mathbf M}
\def\B{\mathbf B}
\def\H{\mathbf H}
\def\K{\mathbf K}
\def\1{\mathbf 1}
\def\0{\mathbf 0}
\def\bbN{\mathbb N}
\def\bbR{\mathbb R}
\newtheorem{theorem}{Theorem}
\begin{document}

\title{Computing optimal experimental designs with respect to a compound Bayes risk criterion}
\author{Radoslav Harman$^{\ast,\dagger}$\let\thefootnote\relax\footnote{Department of Applied Mathematics and Statistics, Faculty of Mathematics, Physics and Informatics, Comenius University in Bratislava,  Slovakia}\let\thefootnote\relax\footnote{Department of Applied Statistics, Johannes Kepler University of Linz, Austria} and Maryna Prus$^\ddagger$\let\thefootnote\relax\footnote{Department of Mathematical Stochastics, Otto-von-Guericke-University, Magdeburg, Germany}}
\maketitle

\begin{abstract}
We consider the problem of computing optimal experimental design on a finite design space with respect to a compound Bayes risk criterion, which includes the linear criterion for prediction in a random coefficient regression model. We show that the problem can be restated as constrained A-optimality in an artificial model. This permits using recently developed computational tools, for instance the algorithms based on the second-order cone programming for optimal approximate design, and mixed-integer second-order cone programming for optimal exact designs. We demonstrate the use of the proposed method for the problem of computing optimal designs of a random coefficient regression model with respect to an integrated mean squared error criterion.
\bigskip

\textbf{Keywords:} Optimal design of experiments, Compound criterion, Constrained design, Bayes risk, Random coefficient regression, Linear optimality criterion, $A$-optimality, Second-order cone programming
\bigskip

%\textbf{MSC:} 6205

\end{abstract}

\section{Introduction}

Consider an experiment consisting of a set of trials. For each trial, it is possible to select a \emph{design point} $x$ from a finite \emph{design space} $\X$. 
\bigskip

We will formalize an \emph{exact design} of the experiment as a mapping $\xi: \X \to \bbN_0:=\{0,1,2,\ldots\}$, where $\xi(x)$, $x \in \X$, represents the number of trials to be performed in $x$\footnote{That is, we tacitly assume that all relevant aspects of the experimental design depend on the numbers of trials performed in individual design points, and not on the order of the trials.}. Therefore, the set of all exact designs is $\bbN_0^\X$. Accordingly, an \emph{approximate design} of the experiment is a mapping $\xi: \X \to \bbR_+:=[0,\infty)$, which we understand as a continuous relaxation of an exact design\footnote{Note that we do not assume that an approximate design is a normalized (probability) measure, which is usual in the optimal design literature; see \cite{HBF} for a justification.}. That is, the set of approximate designs is $\bbR_+^\X$.
\bigskip

An \emph{optimality criterion} is a function $\Phi: \bbR_+^\X \to \mathbb{R} \cup \{+\infty\}$ which quantifies exact or approximate designs in the sense that lower criterion values indicate a better design than larger criterion values\footnote{Note that the criterion may depend on an underlying statistical model. The criterion value of $+\infty$ means that the design is the worst possible, e.g., it does not permit an unbiased estimation of the parameters of interest of the model.}. If $\Xi \subset \bbR_+^\X$ is a set of permissible designs and
\begin{equation*}
\xi^* \in \mathrm{argmin}\{\Phi(\xi): \xi \in \Xi\},
\end{equation*}
we say that $\xi^*$ is a $\Phi$-optimal experimental design in the class $\Xi$. If $\Xi \subset \bbN_0^\X$, we say that $\xi^*$ is an \emph{optimal exact design}; if $\Xi$ is a convex set, we usually say that $\xi^*$ is an \emph{optimal approximate design}. Evidently, the existence and other properties of an optimal design depend on both $\Phi$ and $\Xi$.
\bigskip

Let $\f: \X \to \mathbb{R}^p$\footnote{The function $\f$ represents, for instance, the regression functions of a linear regression model, or gradients of the mean-value function of a non-linear regression model in accord with the approach of local optimality.}. For any approximate or exact design $\xi \in \bbR_+^\X$, let
\begin{equation*}
  \M(\xi)=\sum_{x \in \X} \xi(x) \f(x)\f(x)^\top
\end{equation*}
be the \emph{information matrix} of the design $\xi$. For many statistical models, the information matrix $\M(\xi)$ captures the amount of information about unknown parameters of interest, and the appropriate optimality criterion $\Phi$ can be expressed as a function of the information matrix. We will assume that $\mathrm{span}\{\f(x): \: x \in \X\} =\mathbb{R}^p$, which guarantees the existence of a design with a non-singular information matrix. See the monographs \cite{Pazman, Puk, Atkinson} for a more detailed introduction to optimal experimental design.

\subsection{The compound Bayes risk criterion}

The aim of this paper is to propose a method for computing optimal approximate and optimal exact designs with respect to the criterion $\Phi: \bbR_+^\X \to \mathbb{R} \cup \{+\infty\}$ defined by
\begin{equation}\label{eqn:CBRC}
  \Phi(\xi)=\sum_{j=1}^s \tr((\M(\xi)+\B_j)^{-1}\H_j),
\end{equation} 
for all $\xi \in \Xi$ such that $\M(\xi)+\B_j$ are non-singular for all $j=1,\ldots,s$, and defined by $\Phi(\xi)=+\infty$ for all other designs $\xi \in \Xi$. In \eqref{eqn:CBRC}, $\B_1,\ldots,\B_s$ are given non-negative definite $p \times p$ matrices, $\H_1,\ldots,\H_s$ are given positive definite $p \times p$ matrices. We will call \eqref{eqn:CBRC} the compound Bayes risk criterion (CBRC), because for $s=1$ it is the standard Bayes risk criterion; see \cite{gla} for details and \cite{pru2} for its relations to the prediction of individual deviations in random coefficient regression models.
\bigskip

Note also that a weighted-sum generalisation of the CBRC, i.e., if $\Phi_w(\xi)=\sum_j w_j \tr((\M(\xi)+\B_j)^{-1}\H_j)$, where $w_1,\ldots,w_s>0$ are given weights and $\M(\xi)+\B_j$ are non-singular for all $j=1,\ldots,s$, can be reduced to \eqref{eqn:CBRC} by the change $\H_j \to w_j \H_j$ for all $j=1,\ldots,s$.  
\bigskip 

For $s=1$, $\B_1=\0$ and $\H_1=\I_m$ (the identity matrix), the CBRC reduces to the criterion of $A$-optimality, which is one of the most common optimality criteria; see, e.g., the monographs cited above. It is also worth noting that if $\B_1$ is not $\0$, but it is the information matrix corresponding to an initial set of trials, then the minimization of the CBRC is equivalent to finding an $A$-optimal augmentation design of the set of initial trials. 

For $s=1$, $\B_1=\0$, and a general positive definite $\H_1$, CBRC is the linear optimality criterion. Because the linear criterion is convex on the set of non-negative definite symmetric matrices, (e.g., \cite{Pazman}), the standard theorems of convex analysis imply that the general CBRC is also convex.  

For $s=2$, $\B_1=\0$, a general positive definite $\B_2$, and $\H_1$ a multiple of $\H_2$, the CBRC is the recently proposed criterion for a prediction of individual parameters in random coefficient regression models; see \cite{pru3}. This special case of the CBRC is our main motivation for studying CBR criteria; see the next subsection for details.
   
%%%%%%%%%%%%%%%%%%%%%%%%%%%%%%%%%%%%%%%%
\subsection{A random coefficient regression model}\label{subsec:RCM}
Random coefficient regression (RCR) models are popular in many fields of statistical application, especially in biosciences and in medical research. In these models observational units (individuals) are assumed to come from the same population and differ from each other by individual random parameters. The problem of design optimization for estimation of a population parameter is well discussed in the literature (e.g., \cite{fed} or \cite{ent}).
\bigskip

The models with \emph{known} population parameters have been investigated by \cite{gla}. It has been established that Bayes optimal designs are optimal for the prediction of the individual parameters. %If the population parameter is \textit{unknown}, the efficiency of Bayesian designs becomes very low. for the prediction of the individual parameters; however, the designs retain their optimality for the prediction of the individual deviations from the population mean (e.g., \cite{pru2})\footnote{Comment: Maybe the last half-sentence can be skipped or move to the end of this sub-section, where the deviations are mentioned.}.
 For models with \textit{unknown} population parameters, sufficient and necessary conditions\footnote{The so-called equivalence theorem.} for optimal approximate designs for the prediction of individual parameters  are presented in \cite{pru3}. However, the closed form of the optimal approximate designs is available only for some special cases, and the problem of the numerical computation of approximate and exact designs in general RCR model remained open. %In this paper we present practical approach for determining approximate and exact designs for the prediction in the RCR models, where the population parameters are \textit{unknown}, on \textit{finite} design spaces. We focus on general linear criteria and illustrate our results by the IMSE-criterion, which is a particular case.
\bigskip

In an RCR model the $j$-th trial (observation) ${Y}_{ij}$ of individual $i$ is given by ${Y}_{ij}=\mathbf{f}(x_{ij})^\top \mbox{\boldmath{$\beta $}}_i+ \varepsilon_{ij}$ for $j=1,\ldots,m_i$ and $i=1,\ldots,n$, where $m_i$ is the number of trials at individual $i$, $n$ is the number of individuals, and $\mathbf{f} : \X \to \mathbb{R}^p$ are known regression functions. We assume that the experimental settings $x_{ij}$ may range over the \textit{finite} design space $\X$\footnote{This can be a practical finite discretization of a theoretical continuous design space.}. The observational errors $\varepsilon_{ij}$ have zero mean and a common variance $\sigma^2>0$. The individual parameters $\mbox{\boldmath{$\beta $}}_i=( \beta_{i1}, \ldots, \beta_{ip})^\top$ are realizations from a common distribution with unknown population mean $\mathrm{E}\,(\mbox{\boldmath{$\beta $}}_i)={\mbox{\boldmath{$\beta $}}}$ and a population covariance matrix $\mathrm{Cov}\,(\mbox{\boldmath{$\beta $}}_i)=\sigma^2\mathbf{D}$ for a given non-singular $p\times p$ matrix $\mathbf{D}$. All individual parameters $\mbox{\boldmath{$\beta $}}_{i}$ and all observational errors $\varepsilon_{ij}$ are assumed to be uncorrelated.
\bigskip

We work with the particular random coefficient regression models, in which all individuals are observed under the same regime, i.\,e.,\  all individuals $i=1,\ldots,n$ have the same number $m_i=m$ of trials at the same design points $x_{ij}=x_j$:
\begin{equation}\label{2}
	{Y}_{ij}=\mathbf{f}(x_{j})^\top \mbox{\boldmath{$\beta $}}_i+ \varepsilon_{ij},
\end{equation}
for $j=1, \ldots, m$.
\bigskip

Under the exact design $\xi$, linear unbiased predictors of $\mbox{\boldmath{$\beta$}}_i$'s exist if and only if $\M(\xi)$ is non-singular (see \cite{pru1}, Ch.~4). Let $\hat{\mbox{\boldmath{$\beta$}}}_i(\xi)$ be the best linear unbiased predictor of $\mbox{\boldmath{$\beta$}}_i$ for all $i=1,\ldots,n$. If $\M(\xi)$ is non-singular, the linear criterion for the prediction of the individual parameters $\mbox{\boldmath{$\beta $}}_i$ in the model (\ref{2}) is the sum across all individuals of the traces of the mean squared error matrices for linear combinations of the individual random parameters (see \cite{pru3}), i.e.,
\begin{equation}\label{lin}
\sum_{i=1}^{n}\mathrm{tr}\,\left(\mathrm{Cov}\,\left(\mathbf{L}^\top \hat{\mbox{\boldmath{$\beta $}}}_i(\xi)-\mathbf{L}^\top \mbox{\boldmath{$\beta $}}_i\right)\right),
%=\sum_{i=1}^n \int_{\mathcal{X}} \textrm{Cov}\left(  \hat{\mu}_i(x) - \mu_i (x) \right) \nu (\textrm{d}x)
\end{equation}
 where $\mathbf{L}$ is some $p\times q$ matrix. For $\mathbf{A}=\mathbf{L}\mathbf{L}^\top$ the criterion (\ref{lin}) results in
\begin{equation}\label{lreg}
\mathrm{L}_{\rm pred}(\xi)= \mathrm{tr}\,(\mathbf{M}(\xi)^{-1}\mathbf{A})+(n-1)\,\mathrm{tr}\,((\mathbf{M}(\xi)+\mathbf{D}^{-1})^{-1}\mathbf{A}).
\end{equation}
%, and $\A$ is a given $p\times p$ positive definite matrix. 
For a non-singular matrix $\mathbf{A}$ (full row rank matrix $\mathbf{L}$) the criterion (\ref{lreg}) may be recognized as a particular case of the CBRC (\ref{eqn:CBRC}) for $s=2$, $\B_1=\0$, $\B_2=\mathbf{D}^{-1}$, $\H_1=\mathbf{A}$ and $\H_2=(n-1)\,\mathbf{A}$.

The IMSE-criterion for the prediction is defined as the sum of integrated mean squared distances between the predicted and real individual response with respect to a suitable measure $\nu$ on $\mathcal{X}$:
\begin{equation}\label{imse}
\sum_{i=1}^{n}\mathrm{E}\,\left(\int_{\mathcal{X}} (\mathbf{f}(x)^\top \hat{\mbox{\boldmath{$\beta$}}}_i(\xi) - \mathbf{f}(x)^\top \mbox{\boldmath{$\beta $}}_i)^2 \nu (\mathrm{d}x)\right).
\end{equation}
The criterion results in
\begin{equation}\label{phi}
\mathrm{IMSE}_{\rm pred}(\xi) = \mathrm{tr}\,(\mathbf{M}(\xi)^{-1}\mathbf{V})+(n-1)\mathrm{tr}\,((\mathbf{M}(\xi)+\mathbf{D}^{-1})^{-1}\mathbf{V}) \, ,
\end{equation}
where $\mathbf{V}=\int_{\mathcal{X}} \mathbf{f}(x)\mathbf{f}(x)^\top\nu (\mathrm{d}x)$, and can be recognized as a particular case of the linear criterion (\ref{lreg}) for $\mathbf{A}=\mathbf{V}$. A sufficient condition for the matrix $\mathbf{A}$ to be positive definite is that $\nu$ has positive values at all points of $\X$. 

% Note also that the linear criterion for the prediction of the individual deviations $\mathbf{b}_i=\mbox{\boldmath{$\beta $}}_i-\mbox{\boldmath{$\beta $}}$, $i=1, \ldots, n$, results in the Bayesian risk criterion (see \cite{pru3}):
%\begin{equation}\label{lbreg}
%\mathrm{L}_{\rm pred}(\xi)= \mathrm{tr}\,((\mathbf{M}(\xi)+\mathbf{D}^{-1})^{-1}\mathbf{A}),
%\end{equation}
%which is also a special case of the criterion (\ref{eqn:CBRC}) with $s=1$, $\B_1=\mathbf{D}^{-1}$ and $\H_1=\mathbf{A}$, and fall  into the scope of the computational method which we propose.

\subsection{Contribution of the paper}

In principle, optimal approximate and efficient exact designs with respect to the CBRC can be computed by methods applicable to general convex criteria; see, e.g., \cite{man}.  However, the non-standard nature of the CBRC requires specific implementation of the general methods, these methods tend to be  slow, and it is usually difficult to modify them to handle non-standard constraints on the design.
\bigskip

The main idea of this paper is to compute optimal approximate and optimal exact designs for the CBRC by utilizing an artificial optimal design problem, with respect to the standard criterion of $A$-optimality, but with an extended design space and additional linear constraints. This permits using the modern methods of mathematical programming developed for $A$-optimality, such as second-order cone programming, mixed-integer second-order cone programming (see \cite{SagnolHarman}), or integer quadratic programming (analogously to \cite{HF}) to solve the problem of CBR-optimal designs, even under additional linear constraints on the design. For instance, this approach allows computing the CBR-optimal replication-free exact designs, which is often relevant for applications (e.g., \cite{ras}, cf. Section 2.11 of \cite{fedorovleonov}). Note that the mathematical programming methods for $A$-optimality under linear constraints are implemented in the freely available package \texttt{OptimalDesign} for the computing environment \texttt{R} (see \cite{OD}).
\bigskip

In an abstract sense, the idea of this paper is opposite to the technique of the conversion of a constrained design problem to a compound design problem (e.g., \cite{mik} and \cite{coo}), which, however, requires a non-trivial identification of a Lagrange multiplier involved in the compound criterion. In contrast, the method proposed in this paper is completely straightforward, and involves a different kind of constraints.
\bigskip

In Section \ref{sec:main}, we formulate the main theorem, which enables the conversion of the problem of CBR-optimality to the problem of constrained $A$-optimality. In Section \ref{sec:example}, we demonstrate the use of the method for computing optimal designs of a random coefficient regression model with respect to the integrated mean squared error criterion. In particular, we show that our approach can be used to compute CBR-optimal exact designs with respect to constraints that guarantee replication-free designs with a given minimum time delay between consecutive trials.

\section{Conversion of CBR-optimality to constrained A-optimality}\label{sec:main}

Let $r_1,\ldots,r_s$ be the ranks of $\B_1,\ldots,\B_s$. Let $\tilde{\mathcal{Y}}_1=\{y^{(1)}_1,\ldots,y^{(1)}_{r_1}\},\ldots,\tilde{\mathcal{Y}}_s=\{y^{(s)}_1,\ldots,y^{(s)}_{r_s}\}$ be auxiliary sets. For all $j \in \{1,\ldots,s\}$, let $\tilde{\X}_j=\{j\} \times \X$, where $\X$ is the original set of design points. We will also assume that the auxiliary sets are chosen such that $\tilde{\mathcal{X}}_1,\ldots,\tilde{\mathcal{X}}_s, \tilde{\mathcal{Y}}_1,\ldots,\tilde{\mathcal{Y}}_s$ are mutually disjunctive.  The set of design points for our artificial model will be $\tilde{\X}:=\cup_{j=1}^s (\tilde{\X}_j \cup \tilde{\mathcal{Y}}_j)$. 
\bigskip

Let $j \in \{1,\ldots,s\}$ and let $\K_j \in \mathbb{R}^{p \times p}$ be such that $\H_j=\K_j\K^\top_j$. Because the matrices $\H_j$ are positive definite, the matrices $\K_j$ are non-singular. Next, for all $x \in \X$ and $j=1,\ldots,s$ define $\tilde{\f}_j(x):=\K_j^{-1}\f(x)$ and let
\begin{equation*}
\K_j^{-1}\B_j\K_j^{-\top}=\sum_{k=1}^{r_j} \u_j(y^{(j)}_k)\u_j(y^{(j)}_k)^{\top}
\end{equation*}
for some $\u_j(y^{(j)}_k) \in \mathbb{R}^p$, $k=1,\ldots,r_j$.
\bigskip

Next, let $\tilde{\f}: \tilde{\X} \to \mathbb{R}^{sp}$ be defined as
\begin{equation*}
\tilde{\f}(\tilde{x})=\left\{
\begin{array}{ll}
  (\0_p^\top,\ldots,\0_p^\top,\tilde{\f}_j(x)^\top,\0_p^\top,\ldots,\0_p^\top)^\top \in \bbR^{sp} \text{ if } \tilde{x}=(j,x) \in \tilde{\X}_j \\
  (\0_p^\top,\ldots,\0_p^\top,\u_j(y)^\top,\0_p^\top,\ldots,\0_p^\top)^\top \in \bbR^{sp} \text{ if } \tilde{x}=y \in \tilde{\mathcal{Y}}_j
\end{array}
\right.
\end{equation*}
% Here the definition of the tilde f and u is missing. Also, one should explain the exact position of the non-zeros between the zeros (that it depends on j).
for all $j \in \{1,\ldots,s\}$, $x \in \X$, and $y \in \tilde{\mathcal{Y}}_j$, where the blocks $\tilde{\f}_j(x)^\top$ and $\u_j(y)^\top$ in the above expressions form the $j$-th $p$-dimensional blocks within the $(sp)$-dimensional vectors. The vectors $\tilde{\f}$ will form the regression functions of the artificial model on $\tilde{\X}$. Let $\Xi$ be a set of permissible (approximate or exact) designs on $\X$. To avoid uninteresting cases, we will assume that $\Xi$ contains a design $\xi$ such that $\Phi(\xi)<+\infty$. Recall that our primary aim is to find a solution $\xi^*$ of the problem 
\begin{eqnarray}\label{primaryOptProblem}
\min && \Phi(\xi) \nonumber \\
\text{s.t.} && \xi \in \Xi, \nonumber 
\end{eqnarray}
where $\Phi$ is the CBRC defined by \eqref{eqn:CBRC}.
\bigskip

Note that if $\tilde{\xi}$ is a design on $\tilde{\X}$, then its part $\tilde{\xi}(1,\cdot): x \to \tilde{\xi}(1,x)$ is a design on $\X$. For a design $\tilde{\xi}$ on $\tilde{\X}$ with a non-singular information matrix
\begin{equation*}
\tilde{\M}(\tilde{\xi})=\sum_{\tilde{x} \in \tilde{\X}} \tilde{\xi}(\tilde{x}) \tilde{\f}(\tilde{x})\tilde{\f}(\tilde{x})^\top
\end{equation*}
let $\Phi_A(\tilde{\xi})=\mathrm{tr}\left(\tilde{\M}^{-1}(\tilde{\xi})\right)$, and let $\Phi_A(\tilde{\xi})=+\infty$ if $\tilde{\M}(\tilde{\xi})$ is singular. That is, $\Phi_A$ is the standard criterion of $A$-optimality.
\begin{theorem}\label{thm:main}
Let $\tilde{\xi}^*$ be a solution of the optimization problem
\begin{eqnarray}\label{mainOptProblem}
\min && \Phi_A(\tilde{\xi}) \\
\text{s.t.} && \tilde{\xi}(j,x) = \tilde{\xi}(1,x) \text{ for all } x \in \X \text{ and } j=2,\ldots,s, \nonumber \\
&& \tilde{\xi}(y) = 1 \text{ for all } y \in \tilde{\mathcal{Y}}_j \text{ and } j=1,\ldots,s, \nonumber \\
&& \tilde{\xi}(1,\cdot) \in \Xi. \nonumber 
\end{eqnarray}
Then $\tilde{\xi}^*(1,\cdot)$ is a CBR-optimal design in the class $\Xi$.
\end{theorem}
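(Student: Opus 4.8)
The plan is to exhibit a value-preserving bijection between the feasible designs of the artificial $A$-optimal problem \eqref{mainOptProblem} and the feasible designs $\xi \in \Xi$ of the primary problem, so that the two minimizations coincide and a minimizer of one yields a minimizer of the other. The crucial structural fact I would establish first is that, because each regression vector $\tilde{\f}(\tilde{x})$ is supported on a single $p$-dimensional block (namely the $j$-th block when $\tilde{x} \in \tilde{\X}_j \cup \tilde{\mathcal{Y}}_j$), every rank-one summand $\tilde{\f}(\tilde{x})\tilde{\f}(\tilde{x})^\top$ is nonzero only in the diagonal $(j,j)$ block. Hence the artificial information matrix is block-diagonal, $\tilde{\M}(\tilde{\xi})=\mathrm{diag}(\tilde{\M}_1,\ldots,\tilde{\M}_s)$. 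Consequently $\tilde{\M}(\tilde{\xi})$ is non-singular if and only if each block $\tilde{\M}_j$ is non-singular, and in that case $\tr(\tilde{\M}^{-1}(\tilde{\xi}))=\sum_{j=1}^s \tr(\tilde{\M}_j^{-1})$.

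Next I would evaluate each block under the constraints of \eqref{mainOptProblem}. Writing $\xi:=\tilde{\xi}(1,\cdot)$ and using $\tilde{\xi}(j,x)=\tilde{\xi}(1,x)=\xi(x)$ together with $\tilde{\xi}(y)=1$, the $j$-th block separates into a contribution from $\tilde{\X}_j$ and one from $\tilde{\mathcal{Y}}_j$:
\[
\tilde{\M}_j = \sum_{x \in \X} \xi(x)\, \tilde{\f}_j(x)\tilde{\f}_j(x)^\top + \sum_{k=1}^{r_j} \u_j(y^{(j)}_k)\u_j(y^{(j)}_k)^\top .
\]
Substituting $\tilde{\f}_j(x)=\K_j^{-1}\f(x)$ and the defining factorization $\K_j^{-1}\B_j\K_j^{-\top}=\sum_k \u_j(y^{(j)}_k)\u_j(y^{(j)}_k)^\top$, and pulling $\K_j^{-1}$ out of the first sum, this collapses to $\tilde{\M}_j=\K_j^{-1}(\M(\xi)+\B_j)\K_j^{-\top}$. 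Inverting (when the inverse exists) gives $\tilde{\M}_j^{-1}=\K_j^\top(\M(\xi)+\B_j)^{-1}\K_j$, and the cyclic invariance of the trace together with $\K_j\K_j^\top=\H_j$ yields $\tr(\tilde{\M}_j^{-1})=\tr((\M(\xi)+\B_j)^{-1}\H_j)$. Summing over $j$ shows $\Phi_A(\tilde{\xi})=\Phi(\xi)$; moreover the block-diagonal structure shows that $\tilde{\M}(\tilde{\xi})$ is non-singular precisely when all $\M(\xi)+\B_j$ are non-singular, so the two criteria also agree on their $+\infty$ values.

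To conclude, I would observe that the constraints of \eqref{mainOptProblem} exactly parametrize the feasible set: every feasible $\tilde{\xi}$ is uniquely determined by its slice $\xi=\tilde{\xi}(1,\cdot)$, which ranges over all of $\Xi$, and conversely each $\xi \in \Xi$ lifts to a feasible $\tilde{\xi}$ by setting $\tilde{\xi}(j,\cdot)=\xi$ for every $j$ and $\tilde{\xi}(y)=1$ on the auxiliary points. Under this bijection the objective values coincide by the previous step, so a solution $\tilde{\xi}^*$ of \eqref{mainOptProblem} corresponds to a solution $\tilde{\xi}^*(1,\cdot)$ of the primary problem, which is the assertion. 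I do not expect a genuine obstacle: once the block-diagonal structure of $\tilde{\M}(\tilde{\xi})$ is recognized, the equivalence is a direct computation. The only points demanding care are the bookkeeping of the block decomposition and the verification that the singular cases correspond on both sides, so that the equivalence is valid for the extended real-valued criteria rather than only where the information matrices are invertible.
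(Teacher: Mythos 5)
Your proposal is correct and follows essentially the same route as the paper's proof: establishing the block-diagonal structure of $\tilde{\M}(\tilde{\xi})$ with blocks $\K_j^{-1}(\M(\xi)+\B_j)\K_j^{-\top}$, using the cyclic property of the trace to identify $\Phi_A(\tilde{\xi})$ with $\Phi(\tilde{\xi}(1,\cdot))$ including the $+\infty$ cases, and invoking the bijection between the feasible sets to transfer optimality. The only difference is presentational: you spell out the block computation in slightly more detail, while the paper states it as "straightforward to verify."
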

\begin{proof}
  Let us use $\tilde{\Xi}$ to denote the set of all designs $\tilde{\xi}$ satisfying the constraints of \eqref{mainOptProblem}. First, we will prove the following key fact: \begin{equation}\label{KeyFact}
  \text{For all } \tilde{\xi} \in \tilde{\Xi} \text{ we have } \Phi_A(\tilde{\xi})=\Phi(\tilde{\xi}(1,\cdot)),
\end{equation}  
  where $\Phi$ is the CBRC. For any $\tilde{\xi} \in \tilde{\Xi}$ it is straightforward to verify that $\tilde{\M}(\tilde{\xi})$ is a block-diagonal matrix with $s$ diagonal $p \times p$ blocks
\begin{equation}\label{diagonalBlocks}
  \K_j^{-1}\left(\M(\tilde{\xi}(1,\cdot))+\B_j\right)\K_j^{-T}, \:\: j=1,\ldots,s.
\end{equation}
Evidently, $\Phi_A(\tilde{\xi})=+\infty$ if and only if at least one of the matrices \eqref{diagonalBlocks} is singular, which is if and only if $\Phi(\tilde{\xi}(1,\cdot))=+\infty$. If $\Phi_A(\tilde{\xi}) < +\infty$, then all matrices in  \eqref{diagonalBlocks} are non-singular, which means that 
\begin{eqnarray*}
  \Phi_A(\tilde{\xi})&=&\mathrm{tr}\left(\tilde{\M}^{-1}(\tilde{\xi})\right)=\sum_{j=1}^s \mathrm{tr}\left(\K_j^T\left(\M(\tilde{\xi}(1,\cdot))+\B_j\right)^{-1}\K_j\right)=\\
  &=&\sum_{j=1}^s \mathrm{tr}\left(\left(\M(\tilde{\xi}(1,\cdot))+\B_j\right)^{-1}\H_j\right)=\Phi(\tilde{\xi}(1,\cdot)).
\end{eqnarray*}
Observe also that 
\begin{equation}\label{Observation}
 \text{For each } \xi \in \Xi \text{ there is exactly one } \tilde{\xi} \in \tilde{\Xi} \text{ such that } \xi=\tilde{\xi}(1,\cdot).
\end{equation} 
 Now, assume that $\xi^c \in \Xi$ is a candidate permissible design for the original design problem and let $\tilde{\xi}^* \in \tilde{\Xi}$ be any solution of \eqref{mainOptProblem}. Using Observation \eqref{Observation}, Fact \eqref{KeyFact}, optimality of $\tilde{\xi}^*$ in $\tilde{\Xi}$, and again Fact \eqref{KeyFact}, we obtain
\begin{equation*}
  \Phi(\xi^c)=\Phi(\tilde{\xi}^c(1,\cdot))=\Phi_A(\tilde{\xi}^c) \geq \Phi_A(\tilde{\xi}^*) = \Phi(\tilde{\xi}^*(1,\cdot)).  
  \end{equation*}
This proofs that $\tilde{\xi}^*(1,\cdot)$ is the optimal design for the original problem.
\end{proof}

Thus, Theorem \ref{thm:main} allows us to convert the problem of CBR-optimality to a problem of $A$-optimality for an artificial model with an extended design space $\tilde{\X}$ with additional linear constraints. For linearly constrained $A$-optimal design problems, we can use recently developed theoretical characterisations and computational tools; see the examples in the next section.

\section{Example: optimal designs for a random coefficient model}\label{sec:example}

In this section we consider the straight line regression model
\begin{equation}\label{lr}
	Y_{ij}= \beta_{i1}+\beta_{i2}x_j+\varepsilon_{ij}\,,
\end{equation}
$i=1, \ldots, n$, $j=1, \ldots, m$, as a special case of the model (\ref{2}) with the regression function $\mathbf{f}(x)=(1,x)^\top$, on the design space $\X=\{k/(d-1): k = 0,1,\ldots,d-1\}$, where $d \geq 2$.

For numerical calculations we fix the number of individuals and the number of trials per individual by $n=100$ and $m=10$, respectively, and the size of the design space by $d=51$. We assume a diagonal structure of the covariance matrix of random parameters, i.e. the random intercept and the random slope are uncorrelated with each other: $\mathbf{D}= \mathrm{diag}(\delta_1,\delta_2)$. We fix the intercept dispersion by the small value $\delta_1=0.01$. 
For the IMSE-criterion we consider the uniform measure $\nu(x)=1/d$, for all $x\in \X$, which implies $\mathbf{A}=\frac{1}{d}\sum_{l=1}^{d}\mathbf{f}(x_l)\mathbf{f}(x_l)^\top$.
\bigskip

It was established in \cite{pru3} that approximate optimal designs in this model have only two support points $x=1$ and $x=0$. Therefore, only the optimal number of trials $m_1^*$ at $x=1$ has to be determined.

Further we investigate the behaviour of optimal designs in dependence of the slope variance. Instead of the variance parameter $\delta_2 \in (0, \infty)$ we use the rescaled slope variance $\rho=\delta_2/(1+\delta_2)$, which is monotonically increasing in $\delta_2$ and takes all its values on the interval $(0,1)$.
% (note that $\rho=1$ would lead to infinite dispersion matrix $\mathbf{D}$). 
% For this example, we discretise the interval $(0,1)$ in $K=1000$ values of $\rho$.
\bigskip

We use the procedure \texttt{od.SOCP} from the package \texttt{OptimalDesign} in \texttt{R} for computing optimal approximate designs and procedure \texttt{od.MISOCP} for computing optimal exact designs (see \cite{SagnolHarman} for the theoretical background). 
\bigskip

Figure~\ref{f1} illustrates the behaviour of the optimal number of trials  $m_1^*$ for approximate (dashed line) and exact (solid line) designs. 

\begin{figure}[ht]
\centering
       \includegraphics[width=100mm]{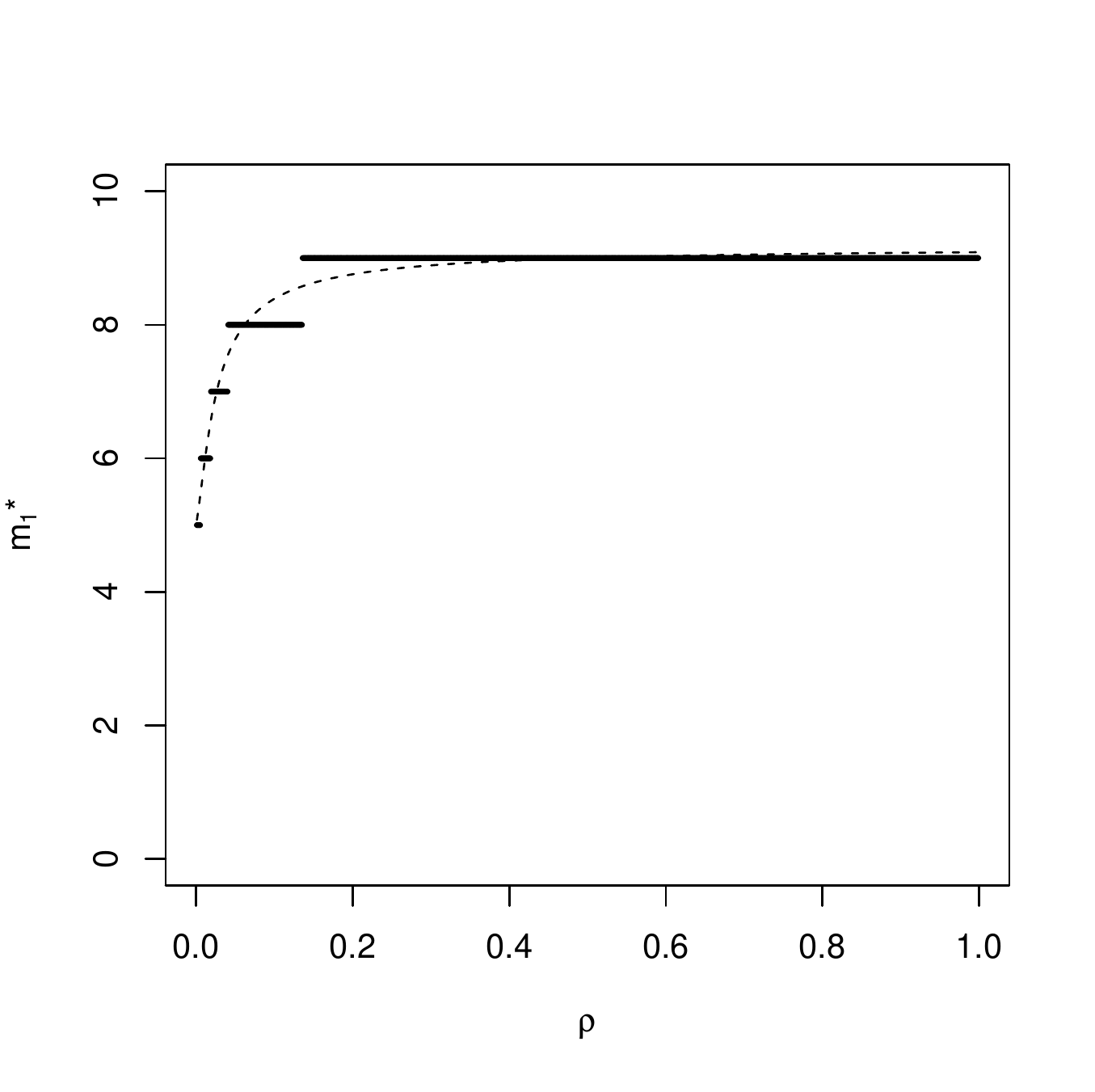}
       \caption{The number $m_1^*$ of trials in the design point $x=1$ for the IMSE-optimal exact (solid line) and approximate (dashed line) designs in model \eqref{lr} (without additional constraints) with respect to the values of the rescaled slope variance $\rho$}
       \label{f1}
\end{figure}  
		
Both exact and approximate optimal designs on the figure start at $m_1^*=5$ for small values of the parameter $\rho$. This is an expected result since for small values of $\delta_2$ the model becomes very close to the fixed effects model, for which balanced designs are optimal. Then the optimal number of trials $m_1^*$ increases with $\rho$.

The values of optimal exact designs on different intervals for $\rho$ are given in Table~\ref{tab1}. Note that the optimal designs do not change within a relatively long intervals of $\rho$, which allows for a certain degree of error in the guess of the nominal value of the variance parameter $\delta_2$.

\begin{table}
  \centering
\begin{tabular}{ ccc }
\hline
 $\rho$ & $\xi^*(0)$ & $\xi^*(1)$ \\
\hline
  $(0, 0.005]$ & 5 & 5 \\
	$[0.006, 0.018]$ & 4 & 6 \\
	$[0.019, 0.040]$ & 3 & 7 \\
	$[0.041, 0.135]$ & 2 & 8 \\
	$[0.136, 1)$ & 1 & 9 \\
\hline
 \end{tabular}
  \caption{IMSE-optimal exact designs $\xi^*$ in model \eqref{lr} (without additional constraints) with respect to the values of the rescaled slope variance $\rho$}\label{tab1}
\end{table}
%$\xi_0=5$ and $\xi_1=5$ for $\rho\in (0, 0.005]$\\
%$\xi_0=4$ and $\xi_1=6$ for $\rho\in [0.006, 0.018]$\\
%$\xi_0=3$ and $\xi_1=7$ for $\rho\in [0.019, 0.040]$\\
%$\xi_0=2$ and $\xi_1=8$ for $\rho\in [0.041, 0.135]$\\
%$\xi_0=1$ and $\xi_1=9$ for $\rho\in [0.136, 1)$.

% On the intervals $\rho\in (0.005, 0.006)$, $\rho\in (0.018, 0.019)$, $\rho\in (0.040, 0.041)$ and $\rho\in (0.135, 0.136)$ the graphic ''jumps''. I suggest to remove this, because it is somewhat obvious

% efficient exact designs on the second graphic (Figure~2) are the following:\\
%$\xi_0=5$ and $\xi_1=5$ for $\rho\in (0, 0.005]$\\
%$\xi_0=4$ and $\xi_1=6$ for $\rho\in [0.006, 0.017]$\\
%$\xi_0=3$ and $\xi_1=7$ for $\rho\in [0.018, 0.038]$\\
%$\xi_0=2$ and $\xi_1=8$ for $\rho\in [0.039, 0.118]$\\
%$\xi_0=1$ and $\xi_1=9$ for $\rho\in [0.119, 1)$.

%The optimal designs change their values on the intervals $\rho\in (0.005, 0.006)$, $\rho\in (0.017, 0.018)$, $\rho\in (0.038, 0.039)$ and $\rho\in (0.118 0.119)$.
\bigskip

Now we will consider model (\ref{lr}) with the following additional constraints: within every triple of neighbouring design points, only one trial is possible, i.\,e. permissible designs $\xi$ satisfy (we assume $d \geq 3$ here)
\begin{equation}\label{const}
\xi\left(\frac{k}{d-1}\right)+\xi\left(\frac{k+1}{d-1}\right)+\xi\left(\frac{k+2}{d-1}\right) \leq 1, \quad k=0, \ldots, d-3.
\end{equation}
This may correspond to the requirement of a minimum time distance between successive observations on the same subject.
\bigskip

 Note that permissible designs may take observations at all points of the design region in this case. The next graphics (Figure~\ref{f2}) illustrate the behaviour of optimal exact designs in model with constraints \eqref{const}. These designs take one observation at each support point. 
\begin{figure}[ht]
\centering
       \includegraphics[width=100mm]{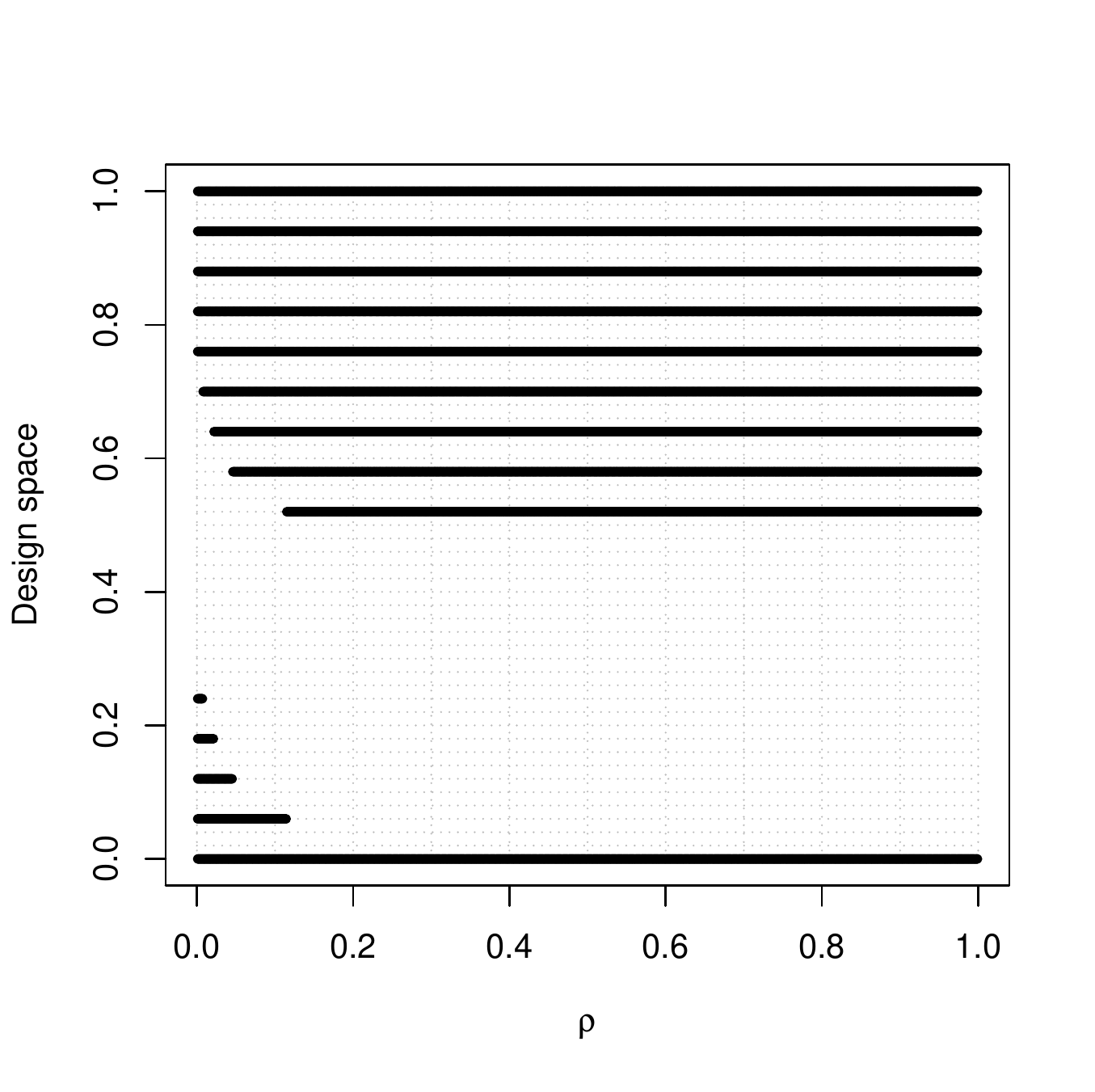}
       \caption{IMSE-optimal exact designs in model \eqref{lr} with additional constraints \eqref{const} with respect to the values of the rescaled slope variance $\rho$}\label{f2}      
    \end{figure}   		
The support points are given by the next table (Table~\ref{tab2}) with respect to the values of the rescaled slope variance $\rho$.
\begin{table}
  \centering
\begin{tabular}{ccccccccccc}
\hline
	$\rho$ & $x_1$ & $x_2$ & $x_3$ & $x_4$ & $x_5$ & $x_6$ & $x_7$ & $x_8$ & $x_9$ & $x_{10}$\\
\hline
 $(0, 0.007]$  & $0$ &$0.06$& $0.12$& $0.18$& $0.24$ & $0.76$ & $0.82$ & $0.88$ & $0.94$ & $1$ \\
	$[0.008, 0.021]$ & $0$ &$0.06$& $0.12$& $0.18$& $0.70$& $0.76$ & $0.82$ & $0.88$ & $0.94$ & $1$ \\
	$[0.022, 0.045]$ & $0$ &$0.06$& $0.12$& $0.64$& $0.70$& $0.76$ & $0.82$ & $0.88$ & $0.94$ & $1$ \\
	$[0.046, 0.114]$ & $0$ &$0.06$& $0.58$& $0.64$& $0.70$& $0.76$ & $0.82$ & $0.88$ & $0.94$ & $1$ \\
	$[0.115, 1)$ & $0$ & $0.52$& $0.58$& $0.64$& $0.70$& $0.76$ & $0.82$ & $0.88$ & $0.94$ & $1$ \\
\hline
 \end{tabular}
  \caption{IMSE-optimal exact designs in model \eqref{lr} with additional constraints \eqref{const} with respect to the values of the rescaled slope variance $\rho$. The support points of the optimal designs are denoted by $x_1,\ldots,x_{10}$}
  \label{tab2}
\end{table}
		
%\textit{For table 2: The interval $(0,1)$ for the values of the variance parameter $\rho$ may be divided in five parts, where the behavior of the designs doesn't change: (Figure~3) $(0,0.007]$, $[0.008,0.021]$, $[0.022,0.045]$, $[0.046,0.114]$, $[0.115,1)$. 
%}
%Figure~4\\
%%1) $\rho \in (0,0.007]$\\
%2) $\rho \in [0.008,0.021]$\\ 
%3) $\rho \in [0.022,0.044]$\\
%4) $\rho \in [0.045,0.109]$\\
%5) $\rho \in [0.110,1)$.

%The designs ''jump'' between the following values of $\rho$: Figure~3: $0.007$ and $0.008$, $0.021$ and $0.022$, $0.045$ and $0.046$, $0.114$ and $0.115$; Figure~4: $0.007$ and $0.008$, $0.021$ and $0.022$, $0.044$ and $0.045$, $0.109$ and $0.110$.
\bigskip

Due to its continuous nature, the approximate designs in the model with constraints may be easily visualized only for a fixed value of the slope variance. The next picture (Figure~\ref{f3}) presents the optimal approximate design for $\rho=0.1$.
\begin{figure}[ht]
\centering
       \includegraphics[width=100mm]{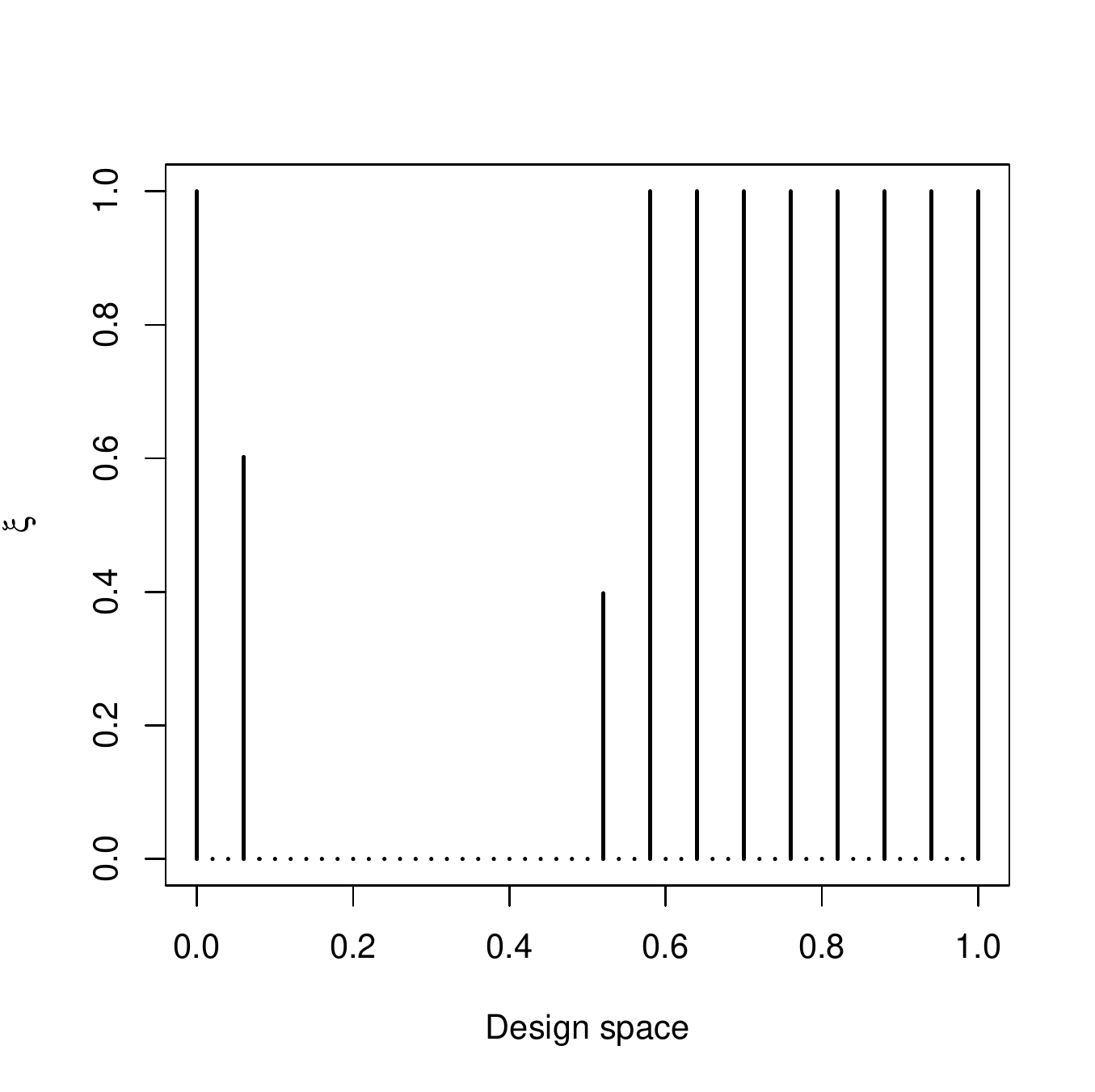}
       \caption{IMSE-optimal approximate designs for $\rho=0.1$ in model \eqref{lr} with additional constraints \eqref{const}}\label{f3} 
    \end{figure} 
	Explicit values for the optimal approximate design are presented in Table~\ref{tab3}.		
\begin{table}
  \centering
\begin{tabular}{cccccccccccc}
\hline
	$x$ & $0$ & $0.06$ & $0.52$& $0.58$& $0.64$& $0.70$& $0.76$ & $0.82$ & $0.88$ & $0.94$ & $1$ \\
 $\xi(x)$  & $1$ & $0.602$& $0.398$& $1$& $1$& $1$& $1$& $1$& $1$ & $1$ & $1$\\
\hline
 \end{tabular}
  \caption{IMSE-optimal approximate designs in model \eqref{lr} with additional constraints \eqref{const} and $\rho=0.1$}
  \label{tab3}
\end{table}		
\bigskip

The computations were performed using the Microsoft Windows 7 Professional operating system with processor Intel(R) Core(TM) i5-6200U CPU at 2.30GHz, 2 cores, RAM 8 GB. The mean computing times for one optimal design (one value of $\rho$) corresponding to the model with standard constraints are $0.2977$ and $4.7918$ seconds for approximate and exact designs, respectively. In the model with additional constraints the corresponding mean computing times are $0.3320$ and $3.4027$ seconds.
\bigskip

Note that the procedure \texttt{od.IQP} (see \cite{HF}), which provides (not necessarily optimal, but usually highly efficient) exact designs, can be used alternatively to \texttt{od.MISOCP}.

\end{document}